\documentclass[11pt, letterpaper]{article}
\usepackage[utf8]{inputenc}

\pdfoutput=1  

\usepackage{amsmath}
\usepackage{amssymb}
\usepackage{amsthm}
\usepackage{amsfonts}
\usepackage{graphicx}
\usepackage{float}
\usepackage{color}
\usepackage{url}
\usepackage[noend]{algorithm2e}
\usepackage{multirow}

\usepackage[margin=1.3in]{geometry}


\makeatletter
\newtheorem*{rep@theorem}{\rep@title}
\newcommand{\newreptheorem}[2]{%
	\newenvironment{rep#1}[1]{%
		\def\rep@title{#2 \ref{##1}}%
		\begin{rep@theorem}}%
		{\end{rep@theorem}}}
\makeatother

\newcommand{\R}{\mathbb{R}}

\newcommand{\F}{\mathcal{F}}

\newcommand{\B}{\mathcal{B}}

\renewcommand{\S}{(S_1,\ldots,S_k)}

\newcommand{\SMP}{\textsc{Sub-$k$-MP}}
\newcommand{\MSMP}{\textsc{Mon-Sub-$k$-MP}}
\newcommand{\SSMP}{\textsc{Sym-Sub-$k$-MP}}
\newcommand{\KCUT}{\textsc{Graph-$k$-Cut}}
\newcommand{\HKCUT}{\textsc{Hypergraph-$k$-Cut}}
\newcommand{\kway}{\textsc{$k$-way}}
\newcommand{\MA}{MA-Min($\F$)}
\newcommand{\MAk}{\textsc{$k$-way} MA-Min($\F$)}


\newtheorem{theorem}{Theorem}
\newtheorem{corollary}{Corollary}
\newtheorem{claim}{Claim}
\newreptheorem{theorem}{Theorem}
\newreptheorem{corollary}{Corollary}

\title{New Approximations and Hardness Results for Submodular Partitioning Problems}
\author{Richard Santiago \footnote{\texttt{rtorres@ethz.ch}, ETH Z\"{u}rich, Switzerland.}}
\date{\vspace{-5ex}}

\begin{document}
\maketitle
	
\begin{abstract}

We consider the following class of submodular k-multiway partitioning problems: (Sub-$k$-MP) $\min \sum_{i=1}^k f(S_i): S_1 \uplus S_2 \uplus \cdots \uplus S_k = V \mbox{ and } S_i \neq \emptyset \mbox{ for all }i\in [k]$. Here $f$ is a non-negative submodular function, and $\uplus$ denotes the union of disjoint sets.
Hence the goal is to partition $V$ into $k$ non-empty sets $S_1,S_2,\ldots,S_k$ such that $\sum_{i=1}^k f(S_i)$ is minimized. 
These problems were introduced by  Zhao et al. partly motivated by applications to network reliability analysis, VLSI design, and hypergraph cut, and other partitioning problems.

In this work we revisit this class of problems and shed some light on their hardness of approximation in the value oracle model. We provide new unconditional hardness results for Sub-$k$-MP in the special settings where the function $f$ is either monotone or symmetric. 
We then extend Sub-$k$-MP to a larger class of partitioning problems, where the functions $f_i(S_i)$ can be different, and there is a more general partitioning constraint $ S_1 \uplus S_2 \uplus \cdots \uplus S_k \in \mathcal{F}$ for some family $\mathcal{F} \subseteq 2^V$ of feasible sets. We provide a black box reduction that allows us to leverage several existing results from the literature; leading to new approximations for this class of problems. 

\end{abstract}



\section{Introduction}
\label{sec:intro-nonempty}

Submodularity is a property of set functions equivalent to the notion of diminishing returns. We say that a set function $f:2^V \to \R$ is \emph{submodular} if for any two sets $A\subseteq B \subseteq V$ and an element $v \notin B$, the corresponding marginal gains satisfy $f(A \cup \{v\}) -f(A) \geq f(B \cup \{v\}) -f(B)$. 
Submodular functions are a classical object in combinatorial optimization and operations research \cite{lovasz1983submodular}. They arise naturally in many contexts such as set covering problems, cuts in graphs, and facility location problems.
In recent years, they have found a wide range of applications in different computer science areas.

Since a submodular function is defined over an exponentially large domain, as is typical in the field
we assume access to a \emph{value oracle} that returns $f(S)$ for a given set $S$.
A great variety of submodular maximization and minimization problems under a wide range of constraints have been considered in the literature. In this work, we are primarily interested in the following class of submodular partitioning problems:

\textsc{Submodular $k$-Multiway Partitioning (\SMP{})}: Given  a non-negative submodular function $f:2^V \to \R_+$ over a ground set $V$, the goal is to partition $V$ into $k$ non-empty sets $S_1,S_2,\ldots,S_k$ such that $\sum_{i=1}^k f(S_i)$ is minimized. 
That is, $$ (\mbox{\SMP{}}) \quad \min \sum_{i=1}^k f(S_i): S_1 \uplus S_2 \uplus \cdots \uplus S_k = V \mbox{ and } S_i \neq \emptyset \mbox{ for all }i\in [k],$$
where we use $\uplus$ to denote the union of disjoint sets.

Special important cases occur when in addition the function $f$ is either monotone or symmetric. We refer to those as \MSMP{} and \SSMP{} respectively.
Recall that a set function $f$ is monotone if $f(A) \leq f(B)$ whenever $A \subseteq B \subseteq V$, and symmetric if $f(S) = f(V \setminus S)$ for any $S \subseteq V$.

In the absence of the non-emptyness constraints $S_i \neq \emptyset$, the problem is trivial since the partition $(V,\emptyset,\ldots,\emptyset)$ is always optimal by submodularity. However, although at first glance
the non-emptyness constraints may seem inconspicuous, they lead to interesting models and questions
in terms of tractability. We discuss this in more detail next.

These problems were introduced by Zhao, Nagamochi, and Ibaraki in \cite{zhao2005greedy} partly motivated by applications to hypergraph cut and partition problems.
They mention how \SMP{} arises naturally in settings like network reliability analysis (\cite{zhao2002thesis})
and VLSI design (\cite{chopra1999note}).
They also discuss how this class captures several important problems as special cases. 
For instance, the well-studied \KCUT{} problem in graphs where the goal is to remove a subset of edges of minimum weight such that the remaining graph has at least $k$ connected components. This problem is a special case of \SSMP{}, where $f$ corresponds to a cut function in a graph and hence it is symmetric and submodular. Another example is the more general \HKCUT{} problem on hypergraphs, where the goal is to remove a subset of hyperedges of minimum weight such that the remaining hypergraph has at least $k$ connected components. This problem is a special case of \SMP{} (see \cite{zhao2005greedy,chekuri2011approximation} for further details).

The above class of submodular partitioning problems, however, is not as well understood. 
No hardness of approximation for these problems seems to be known under the standard $P \neq NP$ assumption.
In fact, it is not known whether these problems are in P for fixed values of $k>4$ (even in the simpler monotone and symmetric cases). We discuss this in more detail in  Section \ref{sec:related-work}.

One goal of this work is to revisit these problems and shed some light onto their hardness of approximation in the value oracle model.
These hardness results are, thus, information theoretic. That is, limits on the approximability of a problem when only polynomially many queries to the value oracle are allowed. We provide new hardness results  for \SSMP{} and \MSMP{}. 
 
A second goal is to extend \SMP{} to a more general class of problems and initiate the study of its tractability. This seems natural given that \SMP{} already captures fundamental problems such as \KCUT{} and \HKCUT{}, and in addition, its complexity is not as well understood. 
We consider the class of problems given by
\begin{equation*}
\mbox{\textsc{$k$-way} MA-Min($\F$) ~~~~~min~}\sum_{i=1}^k f_i(S_i): S_1 \uplus \cdots \uplus S_k \in \F \mbox{ and } S_i \neq \emptyset \mbox{ for all }i\in [k],
\end{equation*}
where the functions $f_i$ are all non-negative submodular and potentially different, 
and the family $\F \subseteq 2^V$ can be any collection of subsets of $V$.
We denote this class by \textsc{$k$-way Multi-Agent Minimization} (\textsc{$k$-way} MA-Min).
This is partially motivated by the work of Santiago and Shepherd \cite{santiago2018ma} which considers the following class of multi-agent submodular minimization problems:
\begin{equation*}
\mbox{MA-Min($\F$) ~~~~~min~}\sum_{i=1}^k f_i(S_i): S_1 \uplus \cdots \uplus S_k \in \F.
\end{equation*}

We study the connections between these multi-agent problems and their \kway{} versions.
In particular, we show that in many cases the approximation guarantees for MA-Min($\F$) can be extended to the corresponding \kway{} versions at a small additional loss.

\subsection{Related work}
\label{sec:related-work}
Zhao et al.  \cite{zhao2005greedy} show that a simple greedy splitting algorithm achieves a $(2-2/k)$-approximation for both \MSMP{} and \SSMP{} (Queyranne \cite{queyranne1999optimum} announced the same result for symmetric submodular functions), and a $(k-1)$-approximation for the more general \SMP{}. All these approximations hold for arbitrary (i.e., not necessarily fixed) values of $k$. Okumoto et al. \cite{okumoto2012divide} showed that \SMP{} is polytime solvable for $k=3$, and Gui\~{n}ez and Queyranne \cite{guinezsize} showed that the symmetric version \SSMP{} is polytime solvable for $k=4$. We next discuss in more detail the cases where $k$ is fixed (i.e., not part of the input) and when $k$ is part of the input.

\underline{For fixed values of $k$}, the \KCUT{} problem can be solved in polynomial time \cite{goldschmidt1994polynomial,karger1996new}. 
In recent work Chandrasekaran et al. \cite{chandrasekaran2019hypergraph} gave a randomized polytime algorithm for \HKCUT{}, whose complexity had remained an intriguing open problem even for fixed values of $k$. 
In subsequent work Chandrasekaran and Chekuri \cite{chandrasekaran2020hypergraph} gave a deterministic polytime algorithm.
For the more general submodular multiway partitioning problems,  
Chekuri and Ene \cite{chekuri2011approximation} gave a $(1.5 - 1/k)$-approximation for \SSMP{} and a $2$-approximation for \SMP{}. The latter was improved to $2-2/k$ by Ene et al. \cite{ene2013local}. 

\underline{When $k$ is part of the input} \KCUT{} is NP-Hard \cite{goldschmidt1994polynomial}. Hence all the above problems are also NP-Hard (see Appendix \ref{app:monotone-hardness} for details about the monotone case). Moreover, the symmetric and general version are also APX-Hard since they generalize \KCUT{}. We note that \KCUT{} was claimed to be APX-Hard by Papadimitriou (see \cite{saran1995finding}), although a formal proof never appeared in the literature until the recent work of Manurangsi \cite{manurangsi2018inapproximability}. The latter gave conditional hardness by showing that assuming the Small Set Expansion Hypothesis, it is NP-hard to approximate \KCUT{} to within a $2-\epsilon$ factor of the optimum for every constant $\epsilon>0$.
Chekuri and Li \cite{chekuri2015note} give a simple reduction showing that an $\alpha$-approximation for \HKCUT{} implies an $O(\alpha^2)$-approximation for \textsc{Densest-$k$-Subgraph}.
This gives conditional hardness of approximation for \HKCUT{} since the best known approximation for \textsc{Densest-$k$-Subgraph} is $O(n^{1/4+\epsilon})$ \cite{bhaskara2010detecting}, and Manurangsi \cite{manurangsi2017almost} shows that assuming the Exponential Time Hypothesis there is no polynomial-time algorithm with an approximation factor of $n^{1/(\log \log n)^c}$ for some constant $c>0$. 
The \textsc{Densest-$k$-Subgraph} problem is believed to not admit an efficient constant factor
approximation assuming $P \neq NP$.
Since \SMP{} generalizes \HKCUT{}, the above gives conditional hardness on \SMP{}. 


The hardness of approximation for the class of  submodular multiway partitioning problems, however, is not as well understood.
No hardness of approximation for these problems seems to be known under the $P \neq NP$ assumption or under the value oracle model.
In fact, it is not even known whether \MSMP{}, \SSMP{}, or \SMP{}, are in P for fixed low values of $k$ (read $k>3$ for the monotone and general versions, and $k>4$ for the symmetric version).

For the class of multi-agent minimization problems MA-Min($\F$), the special case where $\F = \{V\}$ is known as Minimum Submodular Cost Allocation and has been previously studied \cite{hayrapetyan2005network,svitkina2010facility,ene2014hardness,chekuri2011submodular}. The works of Goel et al. \cite{goel2009approximability} and Santiago and Shepherd \cite{santiago2018ma} studied these problems under more general families. 
For a comprehensive review of multi-agent submodular optimization problems see \cite{santiago2019phdthesis}.


We are not aware of previous work for the \textsc{$k$-way} MA-Min($\F$) class of problems besides the special case of \SMP{}. That is, the case with $\F=\{V\}$ and $f_i =f$ for all $i$.

\subsection{Our contributions}

The contributions of this work are three-fold: new hardness results, a black box reduction, and new applications. We discuss each of these three blocks next.

In this work we initiate the study of the hardness of approximation in the value oracle model for different variants of  \SMP{}. 
We provide the first unconditional hardness of approximation results for \SSMP{} and \MSMP{} in the value oracle model.
For the latter problem we are not aware of any previous (even conditional) hardness 
result.
For \SSMP{} our bound matches the (conditional) inapproximability factor of $2-\epsilon$ from the work of \cite{manurangsi2018inapproximability}.
See Section~\ref{sec:hardness} for proof details and further discussion. 

\begin{theorem}
	\label{thm:SSMP}
	Given any $\epsilon > 0$, any algorithm achieving a $(2 - \epsilon)$-approximation for the \SSMP{} problem when $k$ is part of the input, requires exponentially many queries in the value oracle model.
\end{theorem}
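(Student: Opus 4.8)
The plan is to construct, for a given $\epsilon > 0$, two families of symmetric submodular functions that are indistinguishable to any algorithm making subexponentially many value-oracle queries, yet whose optimal \SSMP{} values differ by a factor arbitrarily close to $2$. The standard tool here is a "hiding a random set" argument in the spirit of the symmetry-gap / indistinguishability lower bounds for submodular optimization. First I would fix the ground set $V$ with $|V| = n$ and let $k$ grow with $n$ (this is allowed since $k$ is part of the input); a natural choice is to take $k = n$, so that every $S_i$ in a feasible partition is a singleton and $\sum_i f(S_i) = \sum_{v \in V} f(\{v\})$ becomes forced, or — more flexibly — to take $k$ slightly smaller than $n$ so that at most a bounded number of parts have size $\geq 2$. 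I would design the "hard" instance so that a hidden balanced bipartition $(A, V \setminus A)$ of $V$ is cheap to cut but every other partition is expensive.

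The key construction: let $f_0$ be a simple symmetric submodular "base" function depending only on cardinality, e.g. $f_0(S) = \min(|S|, |V \setminus S|)$ or a scaled cut-like function, and let the planted function be $f_A(S) = g(|S \cap A|, |S \setminus A|)$ for a suitable symmetric submodular $g:\Z_+^2 \to \R_+$ chosen so that $f_A$ is symmetric under $S \mapsto V \setminus S$, agrees with $f_0$ on "balanced" sets, and is strictly smaller than $f_0$ exactly on sets that respect the partition $(A, V\setminus A)$. I would verify submodularity of $f_A$ by checking that $g$ is submodular and coordinatewise concave-ish in the right way (this is a routine but delicate calculation). The indistinguishability argument is then the crux: for a uniformly random set $A$ of size $n/2$, any fixed sequence of $2^{o(n)}$ queries $S_1, S_2, \dots$ will, with high probability over $A$, have $|S_j \cap A| \approx |S_j|/2$ for all $j$ simultaneously (a union bound over queries against a Chernoff/Hoeffding concentration bound), so on all queried sets $f_A$ returns the same value as $f_0$; hence the algorithm cannot tell $f_A$ from $f_0$.

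To finish, I would compute the two optima. On the base instance $f_0$, I would argue every feasible $k$-partition has cost at least $(2 - o(1)) \cdot \mathrm{OPT}$, using the cardinality-only structure of $f_0$ (each part contributes essentially its size, or a cut-value, with no discount). On the planted instance $f_A$, the partition that places $A$ (or pieces of $A$ and of $V \setminus A$) into its parts so as to hit the discounted regime of $g$ achieves cost roughly half of that. Normalizing, the ratio of the two optima tends to $2$ as $n \to \infty$ and the loss parameters shrink, which combined with the indistinguishability gives the $2 - \epsilon$ lower bound: an algorithm with guarantee $2 - \epsilon$ would have to return a solution of value $< (2-\epsilon)\mathrm{OPT}(f_A)$ on the planted instance but $\geq (2 - o(1))\mathrm{OPT}(f_0)$ on the base instance, and since it behaves identically on both (given few queries) this is a contradiction.

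The main obstacle I anticipate is engineering the function $g$ (equivalently $f_A$) so that it is simultaneously (i) genuinely submodular, (ii) genuinely symmetric in the sense $f_A(S) = f_A(V \setminus S)$ — this forces the discount region to itself be symmetric, which constrains the design — and (iii) has the property that its \SSMP{} optimum is a factor $\approx 2$ below that of $f_0$ while $f_0$'s optimum is tight, i.e. no cleverer partition of the planted instance does better than the intended one. Getting all three at once, and in particular ruling out "mixed" partitions on the $f_A$ side that might beat the factor-$2$ gap, is where the careful work lies; the concentration/union-bound half of the argument is comparatively standard.
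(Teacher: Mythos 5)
Your high-level plan is the same as the paper's: hide a random balanced set $R$, pair a cardinality-only symmetric submodular base function with a planted variant that is discounted on sets aligned with $(R,\bar R)$, prove indistinguishability by Chernoff plus a union bound over $2^{o(n)}$ queries, and convert the gap in optima into an oracle lower bound. Indeed, the functions you gesture at are essentially the Svitkina--Fleischer pair that the paper uses verbatim (the paper's $f_1(S)=\min\{|S|,n/2\}-|S|/2$ is exactly $\tfrac12\min(|S|,|V\setminus S|)$), and the paper simply cites that prior work for submodularity, symmetry, and indistinguishability rather than re-deriving them --- so the ``main obstacle'' you flag at the end is available off the shelf.

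The genuine gap is your choice of $k$. You propose $k=n$ (all singletons) or $k$ slightly smaller than $n$ so that only boundedly many parts have size $\ge 2$. In that regime the argument collapses: on singletons and on sets of bounded size the planted function agrees with the base function (the discount $\beta+\min\{|S\cap R|,|S\cap\bar R|\}$ with $\beta=\Theta(n)$ never undercuts $\min\{|S|,n/2\}$ for small $|S|$), so the two instances have identical optimal values and there is no gap at all. The factor of $2$ requires $k$ in a very different range: the paper takes $k=\tfrac n2+1$, which by pigeonhole forces every part to have size at most $n/2$ --- making the base instance cost exactly $\sum_i|S_i|/2=n/2$ for \emph{every} feasible partition --- while still permitting one part to be all of $\bar R$ (size exactly $n/2$), which under the planted function costs only $\beta-\tfrac n4=\tfrac{n\epsilon}{4}=o(n)$; the remaining $n/2$ elements of $R$ go into singletons costing $\tfrac12$ each, for a total of $\approx\tfrac n4$. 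Your sketch correctly identifies that the cheap solution should exploit the hidden bipartition, but the parameter regime you commit to is incompatible with that, and ``ruling out cleverer partitions'' on the base side is trivial only because of the pigeonhole bound $|S_i|\le n/2$ that your choice of $k$ does not provide. Fixing this requires replacing your choice of $k$ by $k=\tfrac n2+1$ and redoing the two optimum computations accordingly.
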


\begin{theorem}
	\label{thm:MSMP}
	Given any $\epsilon > 0$, any algorithm achieving a $(\frac{4}{3} - \epsilon)$-approximation for the \MSMP{} problem when $k$ is part of the input, requires exponentially many queries in the value oracle model.
\end{theorem}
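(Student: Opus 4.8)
The plan is to construct, for a suitable parameter regime, two families of instances of \MSMP{} that are indistinguishable to any algorithm making only polynomially many value-oracle queries, yet whose optimal values differ by a factor approaching $4/3$. This is the standard symmetry-gap / indistinguishability template used for value-oracle lower bounds: we exhibit a "base" monotone submodular function $f$ together with a hidden random partition, and a "planted" perturbation $f_R$ that agrees with $f$ on all sets except those that are highly correlated with the hidden structure $R$; a union bound over the (polynomially many) queried sets shows that with high probability no query distinguishes $f$ from $f_R$, so the algorithm must return the same partition on both, and that partition is suboptimal on at least one of them by the target factor.

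\medskip

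Concretely, I would take $V$ with $|V|=n$ and let $k=2$ be sufficient for the hardness (the bound only needs $k$ to be part of the input, so the worst case over $k$ is what matters; in fact a two-part instance already exhibits the $4/3$ gap and one can pad with singletons for larger $k$ at negligible cost — this padding step needs a small check). For the base instance pick a monotone submodular function whose minimum two-way split has value $V_{\text{base}}$, e.g. a function built from a symmetrized coverage or a "rank-like" concave-of-cardinality function $f(S)=\phi(|S|)$ for a concave $\phi$, so that $f(S_1)+f(S_2)=\phi(a)+\phi(n-a)$ is minimized in a controlled way. For the planted instance, fix a hidden balanced bipartition $(R,\bar R)$ and define $f_R(S)$ to be a monotone submodular function that rewards $S$ for being aligned with $R$ or with $\bar R$ — for instance of the form $f_R(S)=\min\{\phi(|S|),\, \psi(|S\cap R|) + \psi(|S\cap\bar R|) + c\}$ or a matroid-rank-style truncation — tuned so that the split $(R,\bar R)$ achieves value $V_{\text{planted}}$ with $V_{\text{base}}/V_{\text{planted}} \to 4/3$, while every "unaligned" set $S$ (meaning $|S\cap R|$ close to $|S|/2$) has $f_R(S)=f(S)$. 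One verifies monotonicity and submodularity of $f_R$ directly from the min/truncation structure.

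\medskip

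The indistinguishability step: for a uniformly random $R$, any fixed query set $S$ has $|S\cap R|$ concentrated around $|S|/2$ by a Chernoff bound, hence $f_R(S)=f(S)$ with probability $1-e^{-\Omega(n)}$; a union bound over $2^{\mathrm{poly}(n)}$... over the at most $\mathrm{poly}(n)$ queried sets leaves the adversary's $R$ undetected with probability $1-e^{-\Omega(n)}$. Therefore any polynomial-query algorithm behaves identically on the two instances and cannot do better than $\max(V_{\text{base}}, \text{value of its output on planted})$, forcing a $(4/3-\epsilon)$ ratio on one of them.

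\medskip

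The main obstacle I anticipate is the \emph{design of $f_R$}: it must simultaneously (i) be monotone and submodular, (ii) coincide with the base function on all "balanced" sets so that queries reveal nothing, (iii) make the planted split cheap enough to open a $4/3$ gap but no cheaper (pushing the construction too hard either breaks submodularity or lets a balanced split also be cheap, killing the gap), and (iv) survive the non-emptiness and $k$-partition constraints. Getting the constants to land exactly at $4/3$ — rather than some smaller unspecified constant — is where the real work lies; I would expect to need a careful choice of the concave profile $\phi$ and truncation level, and possibly a small combinatorial gadget (a few "mandatory" elements per part) to handle the $S_i\neq\emptyset$ requirement and the padding to general $k$.
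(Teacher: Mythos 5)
Your high-level template is the right one and matches the paper's: a pair of monotone submodular functions, one cardinality-based and one with a planted random set $R$, that agree except on sets strongly correlated with $R$, plus a Chernoff/union-bound indistinguishability argument. The paper instantiates this with exactly the Svitkina--Fleischer pair shifted to be monotone, namely $f_3(S)=\min\{|S|,\frac n2\}$ and $f_4(S)=\min\{|S|,\frac n2,\beta+|S\cap R|,\beta+|S\cap\bar R|\}$ with $\beta=\frac n4(1+\epsilon)$, so the design work you flag as ``where the real work lies'' is genuinely the content of the proof and is not supplied in your proposal.

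The concrete structural choice you do make --- taking $k=2$ and padding with singletons --- is wrong, and it misses the key idea. For fixed $k=2$ the monotone problem is solvable exactly in polynomial time: $g(S):=f(S)+f(V\setminus S)$ is submodular, and minimizing it over nonempty proper subsets is standard submodular minimization (indeed Zhao et al.'s $(2-2/k)$ guarantee is already a $1$-approximation at $k=2$, and the problem is known to be in P for $k=3$ as well), so no oracle gap can exist there. Moreover, in the natural construction the balanced split $(R,\bar R)$ costs $2\beta\approx n/2$, the same as the trivial split $(\{v\},V\setminus v\})$, so there is no gap to pad. The essential trick in the paper is to let $k$ grow with $n$: setting $k=\frac n2+1$ forces every part of any feasible partition to have size at most $\frac n2$, which makes the truncation at $\frac n2$ inert for $f_3$ (every feasible solution costs exactly $\sum_i|S_i|=n$, so the lower bound on $\mathrm{OPT}_3$ is immediate), while $f_4$ admits the highly unbalanced partition $(\bar R,\{r_1\},\dots,\{r_{n/2}\})$ of cost $\beta+\frac n2=\frac{3+\epsilon}{4}n$, giving the ratio $\frac{4}{3+\epsilon}\to\frac43$. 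Without this interplay between the number of parts and the truncation threshold, your outline does not produce the claimed gap.
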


Our main algorithmic result is a black box procedure which, at a small additional loss, turns a solution for the multi-agent problem (i.e., MA-Min($\F$)) into a solution for the \kway{} version (i.e., \kway{} MA-Min($\F$)). We do this in the case where the objective functions $f_i$ are non-negative and monotone, and for families $\F$ that are upwards closed (i.e., if $S\in \F$ and $T \supseteq S$ then $T \in \F$). The latter is a mild assumption given that the functions are monotone. Our guarantees are tight up to a small constant additive term. 

\begin{theorem}
	\label{thm:+1gap}
	Let $\F$ be an upwards closed family.
	Then an $\alpha(n,k)$-approximation for monotone MA-Min($\F$) implies an $(\alpha(n,k)+1)$-approximation for monotone \textsc{$k$-way} MA-Min($\F$). In addition, there are instances where achieving an $(\alpha(n,k)+\frac{1}{3}-\epsilon)$-approximation requires exponentially many queries in the value oracle model for any $\epsilon >0$.
\end{theorem}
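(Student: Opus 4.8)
The plan is to prove the two parts separately: a black-box ``repair'' reduction for the positive direction, and a direct instantiation of Theorem~\ref{thm:MSMP} for the lower bound. For the positive direction, I would first run the assumed $\alpha(n,k)$-approximation for monotone MA-Min($\F$) to get disjoint $S_1,\ldots,S_k$ with $U := \bigcup_i S_i \in \F$ and $\sum_i f_i(S_i) \le \alpha(n,k)\cdot\mathrm{OPT}_{\mathrm{MA}}$. (If $\F=\emptyset$ or $|V|<k$ then \kway{} MA-Min($\F$) is infeasible and there is nothing to prove, so assume $\F\neq\emptyset$ and $|V|\ge k$; then $V\in\F$ since $\F$ is upwards closed.) The key idea is to make every part non-empty by inserting into each exactly one cheap element: I would compute, via a minimum-cost bipartite perfect matching (costing $O(nk)$ oracle calls to build the cost matrix), distinct elements $t_1,\ldots,t_k\in V$ minimizing $\sum_{i\in[k]} f_i(\{t_i\})$ over all injections $[k]\hookrightarrow V$, set $R=\{t_1,\ldots,t_k\}$, and output $S_i' := (S_i\setminus R)\cup\{t_i\}$. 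Feasibility is then checked as follows: the sets $S_i\setminus R$ are pairwise disjoint and disjoint from $R$, and we add a distinct element of $R$ to each, so the $S_i'$ are pairwise disjoint; each is non-empty since $t_i\in S_i'$; and $\bigcup_i S_i' = (U\setminus R)\cup R = U\cup R \supseteq U\in\F$, hence $\bigcup_i S_i'\in\F$ by upwards-closedness.

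For the cost, since $S_i'\subseteq S_i\cup\{t_i\}$, monotonicity gives $f_i(S_i')\le f_i(S_i\cup\{t_i\})$, and submodularity with $f_i(\emptyset)\ge 0$ gives $f_i(S_i\cup\{t_i\})\le f_i(S_i)+f_i(\{t_i\})$; summing yields $\sum_i f_i(S_i')\le \alpha(n,k)\cdot\mathrm{OPT}_{\mathrm{MA}}+\sum_i f_i(\{t_i\})$. I would then bound the two terms: every \kway{}-feasible solution is MA-Min-feasible, so $\mathrm{OPT}_{\mathrm{MA}}\le\mathrm{OPT}_{\mathrm{kway}}$; and taking an optimal \kway{} solution $(T_1^*,\ldots,T_k^*)$ and any $t_i^*\in T_i^*$ gives $k$ distinct elements, so by optimality of the matching and monotonicity $\sum_i f_i(\{t_i\}) \le \sum_i f_i(\{t_i^*\}) \le \sum_i f_i(T_i^*) = \mathrm{OPT}_{\mathrm{kway}}$. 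Combining gives $\sum_i f_i(S_i') \le (\alpha(n,k)+1)\,\mathrm{OPT}_{\mathrm{kway}}$.

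For the lower bound, I would take $\F=\{V\}$ (which is upwards closed) and restrict to instances with $f_i=f$ for a common monotone submodular $f$; these are precisely the \MSMP{} instances. On such an instance MA-Min($\{V\}$) is the problem of partitioning $V$ into possibly-empty parts so as to minimize $\sum_i f(S_i)$, and repeated submodularity gives $\sum_i f(S_i)\ge f(\bigcup_i S_i)+(k-1)f(\emptyset)=f(V)+(k-1)f(\emptyset)$, a value attained by $(V,\emptyset,\ldots,\emptyset)$; hence on these instances MA-Min is solved optimally, i.e.\ $\alpha(n,k)=1$ is attainable. On the other hand, \kway{} MA-Min($\{V\}$) with $f_i=f$ is exactly \MSMP{}, so Theorem~\ref{thm:MSMP} says a $(\tfrac{4}{3}-\epsilon)$-approximation requires exponentially many value-oracle queries; as $\tfrac{4}{3}-\epsilon = 1+\tfrac{1}{3}-\epsilon = \alpha(n,k)+\tfrac{1}{3}-\epsilon$, this is exactly the claimed hardness.

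I expect the main obstacle to be the feasibility bookkeeping in the repair construction: making sure the $S_i'$ stay pairwise disjoint after removing $R$ from \emph{all} parts and reinserting one element of $R$ into each, and that the union step uses nothing beyond upwards-closedness of $\F$. Once the minimum-cost matching on singletons is in place, the cost estimate is just a two-line application of monotonicity and submodularity, and the lower bound is immediate from Theorem~\ref{thm:MSMP}.
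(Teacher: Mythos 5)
Your proposal is correct and follows essentially the same route as the paper: a minimum-cost injection $[k]\hookrightarrow V$ on singleton values (the paper's minimum $[k]$-saturating matching), the repair $S_i' = (S_i\setminus R)\cup\{t_i\}$, the cost bound via monotonicity plus subadditivity, the bound on the matching cost by shrinking an optimal \kway{} solution to singletons, and the identical lower-bound instantiation with $\F=\{V\}$ and Theorem~\ref{thm:MSMP}. No gaps; the only differences are cosmetic (ordering of the monotonicity/subadditivity steps and the explicit treatment of degenerate infeasible instances).
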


We remark that improving the above additive term would lead to an improvement of the best current approximation factor for \MSMP{} (i.e., the setting where $\F=\{V\}$ and $f_i=f$ for all $i$). The multi-agent version of this problem: $\min \sum_{i=1}^k f(S_i): S_1  \uplus \cdots \uplus S_k = V$, has a trivial $1$-approximation by taking the partition $V,\emptyset,\ldots,\emptyset$, and hence by Theorem~\ref{thm:+1gap} we obtain a $2$-approximation for the corresponding \kway{} version, i.e., for \MSMP{}. This matches asymptotically the best known approximation of $2-2/k$ for this problem given in \cite{zhao2005greedy}. It is not known however whether this is tight.

The above black box result leads to interesting applications (see Section~\ref{sec:min-nonempty} for full details). For instance, the problem \kway{} MA-Min($\F$) with monotone functions $f_i$ and where $\F$ corresponds to the family of vertex covers of a graph, admits a tight $O(\log n)$-approximation. Moreover, in the case where all the functions $f_i$ are the same, this becomes a $3$-approximation. To the best of our knowledge this is the current best approximation for this problem. 
The special case $k=1$ corresponds to the submodular vertex cover problem studied in \cite{goel2009approximability}, where a hardness of $2-\epsilon$ is shown in the value oracle model.

\begin{corollary}
	\label{cor:3-approx-vertex-cover}
	There is a $3$-approximation algorithm for the problem $\min \sum_{i \in [k]} f(S_i): S_1 \uplus \cdots \uplus S_k \in \F$ and $S_i \neq \emptyset$ for all $i \in [k]$, where $f$ is non-negative monotone submodular and $\F$ is the family of vertex covers of a graph. 
\end{corollary}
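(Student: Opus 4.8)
The plan is to derive this directly from Theorem~\ref{thm:+1gap} applied to the family $\F$ of vertex covers of the graph. There are three steps: (1) check that $\F$ is upwards closed; (2) produce a $2$-approximation for monotone MA-Min($\F$) when all agent functions equal a common monotone submodular $f$; (3) invoke Theorem~\ref{thm:+1gap} with $\alpha(n,k)=2$ to obtain a $(2+1)=3$-approximation for the \textsc{$k$-way} version.

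Step~(1) is immediate: if $S\subseteq V$ meets every edge of $G=(V,E)$ and $S\subseteq T$, then $T$ meets every edge as well, so the vertex covers of $G$ form an upwards closed family. For step~(2), consider MA-Min($\F$), i.e.\ $\min \sum_{i=1}^{k} f(S_i)$ subject to $S_1\uplus\cdots\uplus S_k\in\F$ (no non-emptiness constraint). Since $f$ is monotone and non-negative we have $f(S)\ge f(\emptyset)$ for all $S$, and applying submodularity repeatedly on disjoint unions gives $\sum_{i=1}^k f(S_i)\ge f(S_1\uplus\cdots\uplus S_k)+(k-1)f(\emptyset)$ for any feasible tuple; conversely, for any $U\in\F$ the tuple $(U,\emptyset,\ldots,\emptyset)$ is feasible with value $f(U)+(k-1)f(\emptyset)$. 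Hence the optimum of MA-Min($\F$) equals $\min_{U\in\F} f(U)+(k-1)f(\emptyset)$, so it suffices to approximately minimize $f$ over $\F$, which is exactly the submodular vertex cover problem; the latter admits a $2$-approximation \cite{goel2009approximability}. Placing the returned cover $\widehat U$ in the first coordinate and $\emptyset$ elsewhere yields a feasible MA-Min($\F$) solution of value $f(\widehat U)+(k-1)f(\emptyset)\le 2\min_{U\in\F}f(U)+(k-1)f(\emptyset)\le 2\bigl(\min_{U\in\F}f(U)+(k-1)f(\emptyset)\bigr)$, a $2$-approximation (using $(k-1)f(\emptyset)\ge 0$).

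Step~(3) then substitutes $\alpha(n,k)=2$ into Theorem~\ref{thm:+1gap}: since $\F$ is upwards closed, the black box reduction (applied, as for \MSMP{}, with a common objective $f_i=f$) converts the $2$-approximation for monotone MA-Min($\F$) into a $3$-approximation for monotone \textsc{$k$-way} MA-Min($\F$), which is precisely the claim. The substantive work is contained in Theorem~\ref{thm:+1gap} itself, which we assume; the remaining ingredients are the trivial upward-closedness check and the (standard) $2$-approximation for submodular vertex cover, so there is no real obstacle. The one point to keep an eye on is the additive $(k-1)f(\emptyset)$ term, which is disposed of by non-negativity of $f$ (equivalently, one may normalize so that $f(\emptyset)=0$ from the start).
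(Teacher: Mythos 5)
Your proof is correct and follows essentially the same route as the paper: upward-closedness of vertex covers, reduction of monotone MA-Min($\F$) with identical $f_i$ to single-agent submodular vertex cover (where the $2$-approximation of \cite{goel2009approximability} applies), and the $+1$ of Theorem~\ref{thm:+1gap} --- exactly the chain the paper packages as Corollary~\ref{cor:gap+1}. Your explicit bookkeeping of the $(k-1)f(\emptyset)$ term is slightly more careful than the paper's Equation~\eqref{obs:sa-ma-min}, which implicitly takes $f(\emptyset)=0$, but this is a cosmetic difference.
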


\begin{table}[t]
\centering
\resizebox{\textwidth}{!}{
\begin{tabular}{c|c|c|c|c|c|c|}
\cline{2-7}
& \multicolumn{2}{ c| }{ \multirow{2}{*}{\MSMP} } &  \multicolumn{2}{c|}{ \multirow{2}{*}{\SSMP} } &  \multicolumn{2}{ c| }{ \MSMP }  \\ 
&  \multicolumn{2}{c|}{}  &  \multicolumn{2}{c|}{} &   \multicolumn{2}{ c| }{ over vertex covers }   \\ \cline{2-7}
& Approx & Hardness & Approx & Hardness & Approx & Hardness  \\ \cline{1-7}
\multicolumn{1}{ |c| }{Known} & $2-2/k$ \cite{zhao2005greedy} & - & $2-2/k$ \cite{zhao2005greedy} & $2-\epsilon$ (conditional) \cite{manurangsi2018inapproximability}  & - & $2-\epsilon$ \cite{goel2009approximability} \\ \cline{1-7}
\multicolumn{1}{ |c| }{This paper} & $2$ (faster) & $4/3 - \epsilon$ & - & $2-\epsilon$ (unconditional)   & $3$  & -  \\ \cline{1-7} 
\end{tabular}
}
\caption{Comparison of some of the results in this paper with previous work.}
\label{table:results}
\end{table}

Another direct consequence of Theorem~\ref{thm:+1gap} is providing a very simple $2$-approximation for \MSMP{}.
The argument in fact shows that one very specific partition achieves the desired bound. 
In addition to simple, the procedure to build such a partition is also fast. 
Indeed, the running time of the $(2-2/k)$-approximation algorithm provided by Zhao et al. \cite{zhao2005greedy} is $kn^3EO$, where $EO$ denotes the time that a call to the value oracle takes. On the other hand, the running time of this procedure is $O(n \, EO + n\log n)$ and hence almost linear. 

\begin{corollary}
	\label{cor:faster-algo}
	There is a $2$-approximation algorithm for \MSMP{} running in time $O(n \, EO + n\log n)$, where $EO$ denotes the time that a call to the value oracle takes.
\end{corollary}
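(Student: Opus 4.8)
The plan is to apply Theorem~\ref{thm:+1gap} with $\F=\{V\}$ and $f_i=f$ for all $i$, and then observe that the black-box reduction in that theorem, when instantiated on the trivial $1$-approximation for the multi-agent problem $\min \sum_{i=1}^k f(S_i): S_1 \uplus \cdots \uplus S_k = V$, can be carried out very efficiently. The multi-agent optimum is attained by the partition $(V,\emptyset,\ldots,\emptyset)$ (this is a $1$-approximation — in fact optimal — by submodularity/non-negativity, since $f(V)\le \sum_i f(S_i)$ for any partition once we add back $f(\emptyset)\ge 0$ terms, more precisely $f(V) = f(V) + (k-1)f(\emptyset) \le \sum_i f(S_i)$ by submodularity when $f(\emptyset)=0$, and a standard adjustment handles $f(\emptyset)>0$). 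Theorem~\ref{thm:+1gap} then guarantees that the reduction converts this into a $2$-approximation for the \kway{} version, which is exactly \MSMP{}. So correctness of the approximation ratio is immediate from Theorem~\ref{thm:+1gap}; the only thing to argue here is the running time.

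First I would recall what the reduction of Theorem~\ref{thm:+1gap} does on this input: it takes the multi-agent solution (here a single non-empty block $V$ plus $k-1$ empty blocks) and must repair the empty-block violations by moving $k-1$ single elements out of $V$ into the empty parts. The natural choice — and the one the proof of Theorem~\ref{thm:+1gap} should use — is to peel off the $k-1$ elements of smallest singleton value $f(\{v\})$ and place one in each empty part, leaving the rest in the big block. Concretely: sort the ground set by $f(\{v\})$, let $v_1,\ldots,v_{k-1}$ be the $k-1$ cheapest elements, and output the partition $(\,V\setminus\{v_1,\ldots,v_{k-1}\},\ \{v_1\},\ \ldots,\ \{v_{k-1}\}\,)$. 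The additive $+1$ loss in Theorem~\ref{thm:+1gap} comes precisely from bounding $\sum_{i}f(\{v_i\})$ against the optimum, and by monotonicity and submodularity one checks $\sum_{i<k} f(\{v_i\}) \le OPT$ (the $k$ blocks of any feasible solution each contain a distinct element, so $OPT \ge \sum$ of some $k$ distinct singleton values $\ge \sum$ of the $k$ cheapest $\ge \sum_{i<k} f(\{v_i\})$), while the remaining block satisfies $f(V\setminus\{v_1,\ldots,v_{k-1}\}) \le f(V) \le OPT$; adding gives $2\,OPT$.

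The running-time bookkeeping is then routine. Evaluating $f(\{v\})$ for all $v\in V$ costs $n$ oracle calls, i.e.\ $O(n\,EO)$. Sorting the $n$ values and selecting the $k-1$ smallest costs $O(n\log n)$ comparisons (a full sort; one could even do $O(n)$ with selection, but $O(n\log n)$ suffices for the stated bound). Constructing and outputting the partition — moving $k-1$ elements into singleton parts — costs $O(n)$. Evaluating the objective of the returned solution, if desired, costs another $k \le n$ oracle calls, $O(n\,EO)$. Summing, the total is $O(n\,EO + n\log n)$, as claimed.

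The main (and only mild) obstacle is making sure the specific partition used in the proof of Theorem~\ref{thm:+1gap} really is of this ``move the $k-1$ cheapest singletons'' form, rather than something requiring repeated oracle queries or a more expensive search; if the proof of Theorem~\ref{thm:+1gap} phrases the repair step more abstractly, one simply notes that on the instance $\F=\{V\}$, $f_i=f$, the abstract repair specializes to exactly this, since the multi-agent solution is the single block $V$ and the cheapest way (up to the $+1$ slack the theorem already concedes) to vacate $k-1$ elements is by singleton value. Everything else — correctness of the ratio, non-emptiness of all $k$ parts (which needs $k\le n$, implicitly assumed for feasibility), and the arithmetic of the time bound — is direct.
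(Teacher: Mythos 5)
Your proposal is correct and matches the paper's approach: the paper's Algorithm~1 is exactly the procedure you describe (sort by singleton value $f(\{v\})$, make the $k-1$ cheapest elements singletons, put the rest in one block), with the same $O(n\,EO + n\log n)$ accounting, and the approximation ratio obtained by specializing the proof of Theorem~\ref{thm:+1gap} (the minimum $[k]$-saturating matching degenerates to picking the cheapest singletons when all $f_i=f$, and the multi-agent $1$-approximation is the trivial partition). Your direct verification that $\sum_{i<k} f(\{v_i\}) \le OPT$ and $f(V\setminus\{v_1,\ldots,v_{k-1}\}) \le f(V) \le OPT$ is precisely the instantiation of the theorem's two bounding steps.
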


We summarize some of our results and compare them with previous work in Table~\ref{table:results}.

\section{Hardness results in the value oracle model}
\label{sec:hardness}

In this section we provide the first unconditional hardness of approximation results in the value oracle model for \SSMP{} and \MSMP{}. 
In addition, for \MSMP{} we are not aware of any (even conditional) hardness of approximation result previous to this work.

Our results are based on the technique of building two functions that are hard to distinguish with high probability for any (even randomized) algorithm. This was first used in the work of Goemans et al. \cite{goemans2009approximating}, and has since then been used in several subsequent works \cite{feige2011maximizing,svitkina2011submodular,goel2009approximability,iwata2009submodular,santiago2016multivariate}.

\subsection{A $2$-factor inapproximability oracle hardness for \SSMP{}}
\label{sec:sym}

The current best known (conditional) hardness of approximation for \SSMP{} follows from the result of Manurangsi \cite{manurangsi2018inapproximability}, where it is shown that assuming the Small Set Expansion Hypothesis, it is NP-hard to approximate \KCUT{} to within a $2-\epsilon$ factor of the optimum for every constant $\epsilon>0$. Since \SSMP{} generalizes \KCUT{}, the same conditional hardness of approximation automatically applies.
In this section we prove an unconditional lower bound hardness for \SSMP{} in the value oracle model (Theorem \ref{thm:SSMP}). To the best of our knowledge this is the first result of this kind for this problem. 

To prove the desired result, we first build two indistinguishable functions as follows.
Let $|V|=n$ be an even number, $R$ be a random set of size $\frac{n}{2}$, and $\bar{R}$ denote its complement. Define parameters $\epsilon^2 = \frac{1}{n} \omega(\ln n)$ and $\beta = \frac{n}{4}(1+\epsilon)$, such that $\beta$ is an integer. Consider the functions
\begin{equation*}
f_1(S) = \min\big\{|S|, \frac{n}{2}\big\} - \frac{|S|}{2} \;\; \mbox{ and } \;\; f_2(S) = \min\big\{|S|, \frac{n}{2}, \beta + |S \cap R|, \beta + |S \cap \bar{R}| \big\} - \frac{|S|}{2} .
\end{equation*}
These functions were already used in the work of Svitkina and Fleischer \cite{svitkina2011submodular} to prove polynomial hardness of approximation for the submodular sparsest cut and submodular balanced cut problems. They show that the above two functions are non-negative symmetric submodular and hard to distinguish. That is, any (even randomized) algorithm that makes a polynomial number of oracle queries has probability at most $n^{-\omega(1)}$ of distinguishing the functions $f_1$ and $f_2$. 

We use this to show hardness of approximation for \SSMP{} as follows.

\begin{claim}
	\label{claim:sym}
	Consider the \SSMP{} problem with $k=\frac{n}{2}+1$ and inputs $f_1$ and $f_2$. Then, if the input is $f_1$ any feasible solution has objective value at least $\frac{n}{2}$, while if the input is $f_2$ then the optimal value is at most $\frac{n}{4}(1+\epsilon)$.
\end{claim}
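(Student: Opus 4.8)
The plan is to handle the two halves of the claim separately: a lower bound argument for the input $f_1$, and an explicit construction for the input $f_2$.

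For the $f_1$ half, the key observation is that when we partition $V$ into $k = \frac{n}{2}+1$ non-empty parts, every single part $S_i$ has size at most $\frac{n}{2}$: the remaining $k-1 = \frac{n}{2}$ parts are all non-empty, so together they contain at least $\frac{n}{2}$ elements, leaving at most $n - \frac{n}{2} = \frac{n}{2}$ for $S_i$. Consequently $\min\{|S_i|, \frac{n}{2}\} = |S_i|$, so $f_1(S_i) = |S_i| - \frac{|S_i|}{2} = \frac{|S_i|}{2}$ for every $i$. Summing over the parts gives $\sum_{i=1}^k f_1(S_i) = \frac{1}{2}\sum_{i=1}^k |S_i| = \frac{n}{2}$. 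Hence every feasible solution has objective value exactly $\frac{n}{2}$; in particular it is at least $\frac{n}{2}$.

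For the $f_2$ half, I would exhibit the specific partition that takes $S_1 = R$ and then splits $\bar{R}$ into the $\frac{n}{2}$ singletons $\{v\}$ for $v \in \bar{R}$; this is a valid partition into $1 + |\bar{R}| = \frac{n}{2}+1 = k$ non-empty parts. Then I evaluate $f_2$ on each piece. Since $\beta < \frac{n}{2}$ (using $0 < \epsilon < 1$, which holds for $n$ large), we get $f_2(R) = \min\{\frac{n}{2}, \frac{n}{2}, \beta + \frac{n}{2}, \beta + 0\} - \frac{n}{4} = \beta - \frac{n}{4} = \frac{n\epsilon}{4}$. For a singleton $\{v\}$ with $v \in \bar{R}$, since $\beta \geq 1$ we have $f_2(\{v\}) = \min\{1, \frac{n}{2}, \beta + 0, \beta + 1\} - \frac{1}{2} = 1 - \frac{1}{2} = \frac{1}{2}$. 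Adding these up: $f_2(R) + \sum_{v \in \bar{R}} f_2(\{v\}) = \frac{n\epsilon}{4} + \frac{n}{2}\cdot\frac{1}{2} = \frac{n\epsilon}{4} + \frac{n}{4} = \frac{n}{4}(1+\epsilon)$, which certifies that the optimum for input $f_2$ is at most $\frac{n}{4}(1+\epsilon)$.

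There is no serious obstacle here; both halves reduce to elementary counting and to evaluating the min-expressions, once one spots the two ideas: that the non-emptiness constraint forces every part to have size at most $\frac{n}{2}$ (so $f_1$ behaves like $\frac{1}{2}|\cdot|$ on every part), and that feeding $f_2$ the ``balanced'' set $R$ as one part lets the $\beta + |S\cap R|$ and $\beta + |S\cap \bar{R}|$ terms become active and save almost a factor of $2$ on that part. The only points to watch are the two inequalities $1 \le \beta < \frac{n}{2}$, both of which follow from the definition of $\beta$ together with $\epsilon = o(1)$.
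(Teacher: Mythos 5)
Your proof is correct and follows essentially the same route as the paper's: the same counting argument showing every part has size at most $\frac{n}{2}$ (so $f_1$ sums to exactly $\frac{n}{2}$), and the same ``one balanced half plus singletons'' partition for $f_2$ (you use $R$ as the big part and singletons from $\bar{R}$, the paper uses $\bar{R}$ and singletons from $R$, which is equivalent by symmetry since $|R|=|\bar R|=\frac{n}{2}$). Your explicit verification of $1 \le \beta < \frac{n}{2}$ is a welcome extra detail the paper leaves implicit.
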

\begin{proof}
	Since we have $n$ elements that must be split into $\frac{n}{2}+1$ non-empty sets, no more than $\frac{n}{2}$ items can be assigned to any given set. That is, for any feasible solution $S_1,S_2,\ldots,S_k$ we must have that $|S_i|\leq \frac{n}{2}$. It then follows that when the input is $f_1$, any feasible solution $S_1,S_2,\ldots,S_k$ has objective value exactly 	
	$\sum_{i=1}^k f_1(S_i) = \sum_{i=1}^k \big(|S_i| - \frac{|S_i|}{2} \big) = \frac{n}{2}$.
	
	On the other hand, when the input is $f_2$, a feasible solution is given by taking $S_1 = \bar{R}, S_2 = \{r_1\}, S_3 = \{r_2\},\ldots, S_k = \{r_{n/2}\}$, where $R = \{r_1,r_2,\ldots,r_{n/2}\}$. This has objective value
	$$\sum_{i=1}^k f_2(S_i) = f_2(\bar{R}) + \sum_{i=2}^k f_2(r_{i-1}) = \Big(\beta - \frac{n}{4}\Big) + \sum_{i=2}^k \Big(1-\frac{1}{2}\Big)
	= \beta - \frac{n}{4} +  \frac{n}{4} = \beta = \frac{n}{4}\big(1+\epsilon\big).$$
\end{proof}

From the above result it follows that the gap between the optimal solutions for \SSMP{} when the inputs are $f_1$ and $f_2$ is at least
\begin{equation*}
\frac{OPT_1}{OPT_2} \geq \frac{ \frac{n}{2} }{\frac{n}{4} (1+\epsilon)} = \frac{2}{1+\epsilon}.
\end{equation*}
Since $\epsilon=o(1)$ this gap can be arbitrarily close to $2$ for large values of $n$. Given that $f_1$ and $f_2$ are hard to distinguish, this now leads to Theorem \ref{thm:SSMP}.

\begin{proof}[Proof [Theorem \ref{thm:SSMP}]]
	Assume there is an algorithm that makes polynomially many queries to the value oracle and that achieves a $(2-\delta)$-approximation for \SSMP{} for some constant $\delta > 0$. 
	Let the functions $f_1$ and $f_2$ be as defined above, and choose $n$ and the parameter $\epsilon(n)$ so that 
	$(1+\epsilon(n))(2-\delta)<2$, i.e., so that $\epsilon(n) < \delta/(2-\delta)$. Since $\epsilon(n)=o(1)$ and $\delta$ is a constant, this is always possible.
	
	Consider the output of the algorithm when the input is $f_2$. By Claim \ref{claim:sym} in this case the optimal solution is at most $\frac{n}{4} (1+\epsilon)$, and hence the algorithm finds a feasible solution $(S_1,S_2,\ldots,S_k)$ such that $\sum_{i\in [k]} f_2(S_i) \leq (2-\delta) (1+\epsilon) \frac{n}{4} < \frac{n}{2}$, where the last inequality follows from the choice of $\epsilon$. However, there is no feasible solution $(S'_1,S'_2,\ldots,S'_k)$ such that $\sum_{i\in [k]} f_1(S'_i) < \frac{n}{2}$, since by Claim \ref{claim:sym} any feasible solution for $f_1$ has value at least $n/2$. That means that if the input is the function $f_1$ the algorithm outputs a different answer, thus distinguishing between $f_1$ and $f_2$. A contradiction. 
\end{proof}

\subsection{A $4/3$-factor inapproximability oracle hardness for \MSMP{}}
\label{sec:monotone-nonempty}

In this section we prove an unconditional lower bound hardness of approximation for \MSMP{} in the value oracle model (Theorem \ref{thm:MSMP}).
To the best of our knowledge, this is the first hardness of approximation result (either conditional or unconditional) for \MSMP{}. As discussed in Appendix \ref{app:monotone-hardness}, the conditional hardness of approximation for \KCUT{} does not extend to \MSMP{}, since the objective function in that case must take negative values.

The argument is  similar to the one from Section \ref{sec:sym}. We consider the two functions
\begin{equation*}
f_3(S) = \min\big\{|S|, \frac{n}{2}\big\}  \;\; \mbox{ and } \;\; f_4(S) = \min\big\{|S|,\frac{n}{2}, \beta + |S \cap R|, \beta + |S \cap \bar{R}| \big\},
\end{equation*}
where all the parameters are as defined in Section \ref{sec:sym}. Note that $f_3 = f_1 + g$ and $f_4 = f_2 + g$, where $g(S) = |S|/2$. Since both $f_1$ and $f_2$ are submodular, and $g$ is modular, it follows that $f_3$ and $f_4$ are also submodular. Moreover, it is straightforward to check that both $f_3$ and $f_4$ are also non-negative and monotone.

Since $f_1$ and $f_2$ are hard to distinguish, and $f_3(S) \neq f_4(S)$ if and only if $f_1(S) \neq f_2(S)$, it follows that $f_3$ and $f_4$ are also hard to distinguish.

The following result shows the gap between the optimal solutions of the corresponding problems.

\begin{claim}
	\label{claim:monot}
	Consider the \MSMP{} problem with $k=\frac{n}{2}+1$ and inputs $f_3$ and $f_4$. Then, if the input is $f_3$ any feasible solution has objective value at least $n$, while if the input is $f_4$ then the optimal value is at most $\frac{3+\epsilon}{4}n$.
\end{claim}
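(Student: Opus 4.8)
The plan is to mirror the argument used for Claim~\ref{claim:sym}, exploiting the fact that $f_3$ and $f_4$ differ from $f_1$ and $f_2$ only by the modular term $g(S)=|S|/2$, together with the observation that the choice $k=\frac n2+1$ forces every block of a feasible partition to be small.

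First I would handle the input $f_3$. In any feasible partition $S_1,\ldots,S_k$ into $k=\frac n2+1$ nonempty parts, the other $k-1=\frac n2$ parts each contain at least one element, so $|S_i|\le n-\frac n2=\frac n2$ for every $i$. Hence $f_3(S_i)=\min\{|S_i|,\tfrac n2\}=|S_i|$ for all $i$, and therefore $\sum_{i=1}^k f_3(S_i)=\sum_{i=1}^k |S_i|=n$. So every feasible solution has value exactly $n$, in particular at least $n$.

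Next I would handle the input $f_4$ by exhibiting the explicit partition $S_1=\bar R$ and $S_{j+1}=\{r_j\}$ for $j=1,\ldots,\frac n2$, where $R=\{r_1,\ldots,r_{n/2}\}$. Using $1\le\beta=\frac n4(1+\epsilon)<\frac n2$ (valid since $\epsilon<1$ for large $n$), one computes $f_4(\bar R)=\min\{\tfrac n2,\tfrac n2,\beta+0,\beta+\tfrac n2\}=\beta$ and $f_4(\{r_j\})=\min\{1,\tfrac n2,\beta+1,\beta\}=1$. Summing gives $\sum_{i=1}^k f_4(S_i)=\beta+\frac n2\cdot 1=\frac n4(1+\epsilon)+\frac n2=\frac{3+\epsilon}{4}\,n$, so the optimum for $f_4$ is at most $\frac{3+\epsilon}{4}n$.

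There is no real obstacle here; it is a short computation, essentially the same one as in Claim~\ref{claim:sym}. The only points requiring care are the size bound $|S_i|\le\frac n2$ (which uses the precise value $k=\frac n2+1$, since then $k-1=\frac n2$) and the inequalities $1\le\beta<\frac n2$, which ensure the minima defining $f_4$ evaluate as claimed. Combined with the fact that $f_3$ and $f_4$ are hard to distinguish, this produces a gap of $n\big/\bigl(\tfrac{3+\epsilon}{4}n\bigr)=\tfrac{4}{3+\epsilon}$, which tends to $\tfrac43$ as $n\to\infty$; Theorem~\ref{thm:MSMP} then follows by the same contradiction argument as in the proof of Theorem~\ref{thm:SSMP}.
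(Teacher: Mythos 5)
Your proof is correct and follows essentially the same route as the paper's: the bound $|S_i|\le \tfrac n2$ forced by $k=\tfrac n2+1$ gives $\sum_i f_3(S_i)=n$ for every feasible partition, and the explicit partition $\bar R,\{r_1\},\ldots,\{r_{n/2}\}$ gives value $\beta+\tfrac n2=\tfrac{3+\epsilon}{4}n$ under $f_4$. The only difference is that you spell out the verification $1\le\beta<\tfrac n2$ needed to evaluate the minima, which the paper leaves implicit.
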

\begin{proof}
	The argument is very similar to that of Claim \ref{claim:sym}.
	Since any feasible solution $S_1,S_2,\ldots,S_k$ must satisfy $|S_i|\leq \frac{n}{2}$ for all $i$, it then follows that when the input is $f_3$, any feasible solution $S_1,S_2,\ldots,S_k$ has objective value exactly 	
	$\sum_{i=1}^k f_3(S_i) = \sum_{i=1}^k |S_i| = n$.
	
	On the other hand, when the input is $f_4$, a feasible solution is given by $S_1 = \bar{R}, S_2 = \{r_1\}, S_3 = \{r_2\},\ldots, S_k = \{r_{n/2}\}$, where $R = \{r_1,r_2,\ldots,r_{n/2}\}$. This has objective value
	$\sum_{i=1}^k f_4(S_i) = \frac{n}{4}(1+\epsilon) + \frac{n}{2} = \frac{3+\epsilon}{4}n$.
\end{proof}

It follows that the gap between the optimal solutions for \MSMP{} when the inputs are $f_3$ and $f_4$ is at least
\begin{equation*}
\frac{OPT_3}{OPT_4} \geq \frac{4}{3+\epsilon}.
\end{equation*}
Since $\epsilon=o(1)$ this gap can be arbitrarily close to $4/3$ for large values of $n$. Given that $f_3$ and $f_4$ are hard to distinguish, this now leads to Theorem \ref{thm:MSMP}.

\begin{proof}[Proof [Theorem \ref{thm:MSMP}]]
	Assume there is an algorithm that makes polynomially many queries to the value oracle and that achieves a $(4/3-\delta)$-approximation for \MSMP{} for some constant $\delta > 0$. 
	Let the functions $f_3$ and $f_4$ be as defined above, and choose $n$ and the parameter $\epsilon(n)$ so that 
	$(\frac{3+\epsilon}{4})(\frac{4}{3}-\delta)<1$, i.e., so that $\epsilon < \frac{9 \delta}{4-3\delta}$. Since $\epsilon(n)=o(1)$ and $\delta$ is a constant, this can always be done.
	
	Consider the output of the algorithm when the input is $f_4$. By Claim \ref{claim:monot} in this case the optimal solution is at most $\frac{3+\epsilon}{4}n$, and hence the algorithm finds a feasible solution $(S_1,S_2,\ldots,S_k)$ such that $\sum_{i\in [k]} f_4(S_i) \leq (\frac{4}{3}-\delta) \frac{3+\epsilon}{4}n < n$, where the last inequality follows from the choice of $\epsilon$. However, there is no feasible solution $(S'_1,S'_2,\ldots,S'_k)$ such that $\sum_{i\in [k]} f_3(S'_i) < n$, since by Claim \ref{claim:monot} any feasible solution for $f_3$ has value at least $n$. That means that if the input is the function $f_3$ the algorithm outputs a different answer, thus distinguishing between $f_3$ and $f_4$. A contradiction. 
\end{proof}

\section{From multi-agent minimization to the \textsc{$k$-way} versions}
\label{sec:min-nonempty}

In this section we show that if the functions $f_i$ are monotone, then a feasible solution to the \MA{} problem can be turned into a feasible solution to the corresponding \textsc{$k$-way} version (i.e., \MAk{}) at almost no additional loss. Moreover, our argument is completely black box with respect to how the approximation for the MA-Min instance is obtained (i.e., it could be via a greedy algorithm, a continuous relaxation, or any other kind of approach). We show the following.

\begin{reptheorem}{thm:+1gap}
	Let $\F$ be an upwards closed family.
	Then an $\alpha(n,k)$-approximation for monotone MA-Min($\F$) implies an $(\alpha(n,k)+1)$-approximation for monotone \textsc{$k$-way} MA-Min($\F$). In addition, there are instances where achieving an $(\alpha(n,k)+\frac{1}{3}-\epsilon)$-approximation requires exponentially many queries in the value oracle model for any $\epsilon >0$.
\end{reptheorem}
\begin{proof}
	Denote by $OPT$ the value of the optimal solution to the \textsc{$k$-way} problem and by $\overline{OPT}$ the value of the optimal solution to \MA{}. Then it is clear that $\overline{OPT} \leq OPT$ since any feasible solution for the \textsc{$k$-way} version is also feasible for \MA{}. 
	
	Let $G=([k] \uplus V,E)$ denote the complete bipartite graph where the weight of an edge $(i,v)$ is given by $f_i(v)$. Let $M$ be a minimum $[k]$-saturating matching in $G$, that is a minimum cost matching such that every node in $[k]$ gets assigned at least one element. Since the edges have non-negative weights it is clear that $|M|=k$, i.e., each node $i \in [k]$ gets assigned exactly one element from $V$. Denote the edges of the matching by $M=\{(1,u_1), (2,u_2),\ldots,(k,u_k)\}$, and let $U:=\{u_1, u_2,\ldots,u_k\}$ be the elements in $V$ that $M$ is incident to.
	
	We then have that the cost of $M$ is at most $OPT$. Indeed, if $(S^*_1,S^*_2,\ldots,S^*_k)$ is an optimal solution to the \textsc{$k$-way} instance, we can remove elements from the sets $S^*_i$ arbitrarily until each of the sets consists of exactly one element. By monotonicity, removing elements can only decrease the objective value of the solution. Moreover, since now each set consists of exactly one element, this is a feasible $[k]$-saturating matching, and hence its cost is at least the cost of $M$. That is, $\sum_{i\in [k]} f_i(u_i) \leq OPT$.

	Let $(S_1,S_2,\ldots,S_k)$ be an $\alpha$-approximation for the \MA{} instance. Then we have $\sum_{i\in [k]} f_i(S_i) \leq \alpha \cdot \overline{OPT} \leq  \alpha \cdot OPT$. We combine this solution with the matching $M$ by defining new sets $S'_i := (S_i \setminus U) \uplus \{u_i\}$ for each $i\in [k]$. It is clear that this is now a feasible solution to the \textsc{$k$-way} problem since all the sets $S'_i$ are non-empty and pairwise disjoint, and their union
	$\cup_{i \in [k]} S'_i =  U \cup \big( \cup_{i \in [k]} S_i \big)$ belongs to $\F$
	since $\cup_{i \in [k]} S_i \in \F$ and $\F$ is upwards closed. 
	Moreover, the cost of the new solution is given by
	\begin{align*}
	\sum_{i\in [k] } f_i(S'_i) & = \sum_{i\in [k] } f_i(S_i \setminus U + u_i) \leq \sum_{i\in [k]} f_i(S_i \setminus U) +\sum_{i\in [k]} f_i(u_i) \leq
	\sum_{i\in [k]} f_i(S_i) +\sum_{i\in [k]} f_i(u_i)\\
	& \leq \alpha \cdot OPT + OPT = (\alpha + 1) \cdot OPT,
	\end{align*}
	where the first inequality follows from subadditivity (since the functions are non-negative submodular) and the second inequality from monotonicity. 
	
	For the inapproximability result part, consider the family $\F = \{V\}$ and the setting where all the functions $f_i$ are the same. Then the corresponding \MA{} problem has a trivial $1$-approximation, while the \kway{} version corresponds to \MSMP{}. The latter, by Theorem~\ref{thm:MSMP}, cannot be approximated in the value oracle model to a factor of $(\frac{4}{3} - \epsilon)$ for any $\epsilon >0$ without making exponentially many queries. It follows that for these instances, the MA-Min version has an exact solution while the \kway{} versions have an inapproximability lower bound of $4/3 - \epsilon = 1 + 1/3 - \epsilon$. 
	Hence, there are instances where for any $\epsilon >0$, achieving an $(\alpha(n,k)+\frac{1}{3}-\epsilon)$-approximation requires exponentially many queries in the value oracle model.
	This completes the proof.
\end{proof}

	For proving or improving the result from Theorem \ref{thm:+1gap}, one could be tempted to first compute an optimal (or approximate) solution $\S$ to the MA-Min($\F$) problem, and then find an allocation of some of the elements of $F:=\uplus_{i \in [k]} S_i$ among the agents that did not get any item. However, this approach can lead to a large additional loss, since a set $F\in \F$ could be optimal for the MA-Min problem but highly suboptimal for the \kway{} version. The following example shows this, even for the case of modular functions.
	
	Let $T \subsetneq V$ be an arbitrary set of size $2(k-1)$. Let $f_1(S) = |S|$ and $f_i(S)=w(S)$ for all $i\geq 2$ where $w:V\to \R_+$ is the modular function given by $w(v)=1+\epsilon$ for $v\notin T$ and $w(v)=M$ for $v\in T$, for some large value $M$. Moreover, let $\F=\{S:|S|\geq 2(k-1)\}$. Then a feasible (and optimal) solution to the \MA{} problem is given by the allocation $(T,\emptyset,\ldots,\emptyset)$ with objective value $f_1(T)=|T|=2(k-1)$. However, any splitting of some of the items of $T$ among the other $k-1$ agents leads to a solution of cost at least $M(k-1)+(k-1)=(M+1)(k-1)$. On the other hand, an optimal solution for the \kway{} version is given by any partition of the form $(S_1,\{v_2\},\{v_3\},\ldots,\{v_k\})$ where $S_1 \subseteq V$ is any set of $k-1$ elements, and $\{v_2,v_3,\ldots,v_k\} \subseteq V \setminus T$. This leads to a solution of cost $(k-1)+(1+\epsilon)(k-1) = (2+\epsilon)(k-1)$. Thus having a gap in terms of objective value of at least $\frac{(M+1)(k-1)}{(2+\epsilon)(k-1)}=\frac{M+1}{2+\epsilon}$.

Theorem \ref{thm:+1gap} allows us to extend several results from monotone multi-agent minimization to the \kway{} versions. We discuss some of these consequences next. An interesting application is obtained using the $O(\log n)$-approximation from \cite{svitkina2010facility}.

\begin{corollary}
	\label{cor:MSCA-nonempty}
	There is a tight $O(\log n)$-approximation for the allocation problem
	\begin{equation*}
	\min \sum_{i=1}^k f_i(S_i): S_1 \uplus S_2 \uplus \cdots \uplus S_k = V \mbox{ and } S_i \neq \emptyset \mbox{ for all }i\in [k],
	\end{equation*}
	where all the functions $f_i$ are non-negative monotone submodular.
\end{corollary}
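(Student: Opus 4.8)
The plan is to obtain Corollary~\ref{cor:MSCA-nonempty} as a direct instantiation of Theorem~\ref{thm:+1gap}. First I would observe that the problem in the statement is precisely \textsc{$k$-way} MA-Min($\F$) for the singleton family $\F=\{V\}$, since the partition constraint $S_1\uplus\cdots\uplus S_k\in\F$ then forces $\bigcup_i S_i=V$. The family $\{V\}$ is trivially upwards closed, because the only superset of $V$ inside $2^V$ is $V$ itself, so the hypothesis of Theorem~\ref{thm:+1gap} is met. The companion problem MA-Min($\{V\}$) with non-negative monotone submodular $f_i$ is exactly Minimum Submodular Cost Allocation, for which \cite{svitkina2010facility} provides an $O(\log n)$-approximation. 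Feeding $\alpha(n,k)=O(\log n)$ into Theorem~\ref{thm:+1gap} yields an $(O(\log n)+1)=O(\log n)$-approximation for the \textsc{$k$-way} version, which is the upper bound claimed.

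For the word ``tight'' I would supply a matching $\Omega(\log n)$ lower bound via the standard reduction from Set Cover that already underlies the hardness of Minimum Submodular Cost Allocation. Given sets $T_1,\dots,T_k$ over a universe $U$, set $f_i(S):=\mathbf{1}[S\cap U\neq\emptyset]+M\,|(S\cap U)\setminus T_i|$ for a large penalty $M$; each $f_i$ is a sum of a monotone submodular indicator and a non-negative modular term, hence non-negative monotone submodular. To accommodate the non-emptyness constraint $S_i\neq\emptyset$ I would pad the ground set with $k$ dummy elements $d_1,\dots,d_k$ that every $f_i$ ignores (extend $f_i$ by $f_i(S):=f_i(S\cap U)$): assigning $d_i$ to agent $i$ satisfies non-emptyness at zero cost, so for $M$ larger than $k$ the optimum of the resulting \textsc{$k$-way} instance equals the minimum Set Cover size. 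Since Set Cover is hard already with $k=\mathrm{poly}(|U|)$, the padded ground set still has polynomial size and $\log(|U|+k)=\Theta(\log|U|)$, so no $o(\log n)$-approximation is possible, matching the upper bound.

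The bulk of the argument is routine bookkeeping; the only step that requires genuine (if modest) care is verifying that the non-emptyness constraint does not make the problem easier — i.e.\ that the dummy-padding reduction faithfully preserves the Set Cover optimum, in particular that routing a dummy to the ``wrong'' agent, or over-loading a single agent with real elements to avoid dummies, never beats the cover-based allocation once $M$ is chosen large enough. I expect this to be the main obstacle, and it is the only place where the plan has to do more than cite Theorem~\ref{thm:+1gap} and \cite{svitkina2010facility}.
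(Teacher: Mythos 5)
Your proposal is correct and follows essentially the same route as the paper: the upper bound is Theorem~\ref{thm:+1gap} applied with $\F=\{V\}$ together with the $O(\log n)$-approximation of \cite{svitkina2010facility}, and the tightness comes from padding the ground set with $k$ dummy elements that the functions ignore so that the non-emptyness constraints cost nothing. The only cosmetic difference is that you inline the Set Cover construction underlying the $\Omega(\log n)$ lower bound, whereas the paper invokes the known MA-Min lower bound as a black box and reduces MA-Min to the \kway{} version via the same dummy-padding.
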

\begin{proof}
	The approximation factor follows from Theorem \ref{thm:+1gap} and the tight $O(\log n)$-approximation (\cite{svitkina2010facility}) for the corresponding MA-Min instance. To see why this is tight assume that an (asymptotically) better approximation factor of $o(\log n)$ is possible. Then given a MA-Min instance we can reduce it to an instance of the \kway{} version by adding a set $D:=\{d_1,d_2,\ldots,d_k\}$ of $k$ dummy elements to the ground set. That is, consider an instance of the \kway{} version with $V':=V \cup D$, $\F'=\{V'\}$, and $f'_i(S):=f_i(S\cap V)$ for each $i\in[k]$ and $S \subseteq V'$. Then by assumption we have a $o(\log(n+k))$-approximation for this problem, and hence we also have the same approximation factor for the original MA-Min instance. But this contradicts the lower bound of $\Omega(\log n)$ for the MA-Min problem.
\end{proof}

More generally, we can obtain approximation guarantees for families with a bounded blocker. 
Given a family $\F$, there is an associated blocking clutter $\B(\F)$ which consists of the minimal sets $B$ such that 
$B \cap F \neq \emptyset$ for each $F \in \F$. We refer to $\B(\F)$ as the blocker of $\F$. 
We say that $\B(\F)$ is $\beta$-bounded if $|B|\leq \beta$ for all $B \in \B(\F)$.
Families such as $\F=\{V\}$ or vertex covers in a graph, are examples of families with a bounded blocker.
Indeed, the family $\F=\{V\}$ has a $1$-bounded blocker, since $\B(\F)=\{ \{v_1\},\{v_2\},\ldots,\{v_n\}  \}$.
The family $\F$ of vertex covers of a graph $G$ has a $2$-bounded blocker, since $\B(\F) = \{ \{u,v\}: (u,v) \mbox{ is an edge in } G  \}$.
Recall that a set $S \subseteq V$ is a vertex cover in a graph $G$ if every edge in $G$ is incident on a vertex in $S$.

It is shown in \cite{santiago2018ma} that families with a $\beta$-bounded blocker admit a $O(\beta \log n)$-approximation for the multi-agent monotone minimization problem. This, combined with Theorem \ref{thm:+1gap}, implies a $O(\beta \log n)$-approximation for the \kway{} versions. In particular, this leads to a tight $O(\log n)$-approximation for the \kway{} MA-Min($\F$) problem with monotone functions $f_i$ and where $\F$ corresponds to the family of vertex covers of a graph. The tightness follows from Corollary~\ref{cor:MSCA-nonempty} and the fact that vertex covers generalize the family $\F=\{V\}$.

\subsection{The special case where all the functions $f_i$ are the same}

Theorem \ref{thm:+1gap} also leads to interesting consequences in the special case where $f_i = f$ for all $i$. In that setting, it is easy to see that the single-agent and multi-agent versions are equivalent. That is,
\begin{equation}
\label{obs:sa-ma-min}
\min f(S): S \in \F \quad = \quad \min \sum_{i\in [k]} f(S_i):S_1 \uplus S_2 \uplus \cdots \uplus S_k \in \F,
\end{equation}
and moreover $F\in \F$ is an optimal solution to the single-agent problem if and only if the trivial partition $(F,\emptyset,\ldots,\emptyset)$ is an optimal solution to the multi-agent version. Again this just follows from submodularity and non-negativity since then $f(T)\leq f(S) + f(T-S)$ for any $S \subseteq T \subseteq V$. That is, partitioning the elements of a set can only increase the value of the solution.
This leads to the following result.

\begin{corollary}
	\label{cor:gap+1}
	Let $\F$ be any upwards closed family, and
	assume there is an $\alpha(n)$-approximation for the single-agent monotone minimization problem: $\min f(S): S \in \F$. 
	If $f_i = f$ for all $i$, then there is an $\alpha(n)$-approximation for monotone \MA{}, and hence
	an $(\alpha(n)+1)$-approximation for monotone \kway{} MA-Min($\F$). 
\end{corollary}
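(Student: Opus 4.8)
The plan is to chain together the equivalence recorded in~(\ref{obs:sa-ma-min}) with Theorem~\ref{thm:+1gap}. The point is that once all agents share the same monotone function $f$, an approximation for the single-agent problem $\min f(S): S\in \F$ is already, for free, an approximation for the multi-agent problem monotone MA-Min($\F$); Theorem~\ref{thm:+1gap} then upgrades this to the \kway{} version at the cost of the usual additive one.

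First I would use~(\ref{obs:sa-ma-min}) to record that the optimal value $\overline{OPT}$ of monotone MA-Min($\F$) with $f_i=f$ for all $i$ coincides with the optimal value of $\min\{f(S):S\in\F\}$; this is immediate from submodularity and non-negativity and requires no structural assumption on $\F$. Next, given an $\alpha(n)$-approximate solution $S$ to the single-agent problem, I would observe that the trivial partition $(S,\emptyset,\ldots,\emptyset)$ is feasible for MA-Min($\F$) — its parts union to $S\in\F$, and MA-Min($\F$) imposes no non-emptiness constraint — and has cost $f(S)\le \alpha(n)\cdot\overline{OPT}$. Hence this gives an $\alpha(n)$-approximation for monotone MA-Min($\F$), where crucially the factor does not depend on $k$.

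Finally, since $\F$ is upwards closed and all the functions are the monotone function $f$, the hypotheses of Theorem~\ref{thm:+1gap} are met with $\alpha(n,k)=\alpha(n)$, and that theorem directly yields an $(\alpha(n)+1)$-approximation for monotone \kway{} MA-Min($\F$), completing the argument.

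There is no genuine obstacle here: the statement is a direct corollary of~(\ref{obs:sa-ma-min}) and Theorem~\ref{thm:+1gap}. The only points meriting a moment's care are (i) verifying that the trivial partition is actually feasible for MA-Min($\F$), which holds precisely because that problem carries no non-emptiness requirement, and (ii) noting that the upwards-closedness of $\F$ needed to invoke Theorem~\ref{thm:+1gap} is supplied by the hypothesis of the corollary.
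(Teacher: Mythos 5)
Your proposal is correct and follows exactly the paper's own route: invoke the equivalence~(\ref{obs:sa-ma-min}) to transfer the single-agent guarantee to monotone MA-Min($\F$) via the trivial partition, then apply Theorem~\ref{thm:+1gap}. The extra care you take in checking feasibility of $(S,\emptyset,\ldots,\emptyset)$ and the upwards-closedness hypothesis is sound but does not change the argument.
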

\begin{proof}
	By Equation \eqref{obs:sa-ma-min}, an $\alpha(n)$-approximation for the  single-agent monotone minimization problem implies an $\alpha(n)$-approximation for monotone \MA{} in the setting where $f_i = f$ for all $i$. Now the result for the corresponding \kway{} versions immediately follows from Theorem \ref{thm:+1gap}.
\end{proof}

We remark that by taking $\F=\{V\}$ the above corollary leads to a $2$-approximation for \MSMP{}, which matches asymptotically  the currently best known. Hence improving the plus one additive term would lead to an improvement on the approximation factor of the latter problem.

Corollary \ref{cor:gap+1} leads to new results. For instance, using the $2$-approximation from \cite{goel2009approximability,iwata2009submodular} for single-agent monotone minimization over families of vertex covers, we immediately get a $3$-approximation for the corresponding monotone \kway{} MA-Min($\F$) problem over the same type of families. This now proves Corollary~\ref{cor:3-approx-vertex-cover}. 
More generally, given the $\beta$-approximation results (\cite{iyer2014monotone,koufogiannakis2013greedy}) for minimizing a monotone submodular function over families with a $\beta$-bounded blocker, we have the following.

\begin{corollary}
	Let $\F$ be an upwards closed family with a $\beta$-bounded blocker. Then there is a $(\beta+1)$-approximation algorithm for the problem
	$\min \sum_{i \in [k]} f(S_i): S_1 \uplus \cdots \uplus S_k \in \F$  and $S_i \neq \emptyset$ for all $i \in [k]$, where $f$ is a non-negative monotone submodular function.
\end{corollary}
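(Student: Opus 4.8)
The plan is to obtain this as a direct specialization of Corollary~\ref{cor:gap+1}, which already packages Theorem~\ref{thm:+1gap} together with the single-agent$\,\to\,$multi-agent equivalence of Equation~\eqref{obs:sa-ma-min}. The only ingredient that needs to be supplied is a $\beta$-approximation for the \emph{single-agent} problem $\min f(S): S \in \F$ in the regime where $\F$ has a $\beta$-bounded blocker. I would not reprove this: it is precisely the well-known guarantee for minimizing a non-negative monotone submodular function subject to a covering-type constraint of ``width'' $\beta$, due to Iyer et al.~\cite{iyer2014monotone} and Koufogiannakis and Young~\cite{koufogiannakis2013greedy}. The point is just to observe that ``every minimal blocking set has size at most $\beta$'' is exactly the covering-constraint formulation handled there, so these results apply verbatim to $\min f(S): S\in\F$ and give $\alpha(n)=\beta$.

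With that in hand, the argument is a two-line chain. Since $\F$ is upwards closed by hypothesis and all agents share the same function $f_i=f$, Corollary~\ref{cor:gap+1} applies: the $\beta$-approximation for the single-agent monotone minimization problem first yields (via the equivalence \eqref{obs:sa-ma-min}, i.e.\ submodularity plus non-negativity) a $\beta$-approximation for monotone \MA{}, and then Theorem~\ref{thm:+1gap} turns any $\alpha$-approximation for monotone MA-Min$(\F)$ into an $(\alpha+1)$-approximation for the corresponding monotone \kway{} MA-Min$(\F)$. Taking $\alpha=\beta$ gives the claimed $(\beta+1)$-approximation for $\min \sum_{i\in[k]} f(S_i): S_1 \uplus \cdots \uplus S_k \in \F$ with all $S_i\neq\emptyset$.

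There is no real obstacle in this proof; the content was already established in Theorem~\ref{thm:+1gap} and Corollary~\ref{cor:gap+1}, and everything specific to $\beta$-bounded blockers is inherited from the cited single-agent results. The one thing I would be careful to state explicitly is the matching of hypothesis classes --- that an upwards-closed family with a $\beta$-bounded blocker is admissible both for the black-box reduction (it needs upwards closedness) and for the single-agent covering algorithm (it needs the blocker bound) --- so that the two cited pieces compose cleanly. (If one preferred a self-contained treatment, one could instead model $\min f(S): S\in\F$ directly as submodular-cost covering and run the greedy/primal-dual algorithm of \cite{iyer2014monotone,koufogiannakis2013greedy}, but reusing Corollary~\ref{cor:gap+1} is the shorter route.)
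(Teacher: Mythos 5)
Your proposal is correct and follows exactly the paper's route: the paper derives this corollary in one line by citing the $\beta$-approximation of \cite{iyer2014monotone,koufogiannakis2013greedy} for the single-agent problem over a family with a $\beta$-bounded blocker and plugging it into Corollary~\ref{cor:gap+1} (hence Theorem~\ref{thm:+1gap}), just as you do. Your added remark about checking that upwards closedness and the blocker bound are each used by the appropriate cited ingredient is a sensible bit of care, but nothing beyond the paper's argument.
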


\subsection{A simpler and faster $2$-approximation for \MSMP}
Another direct consequence of Theorem \ref{thm:+1gap} is to provide a very simple and fast $2$-approximation for \MSMP.
We describe the procedure in Algorithm~\ref{alg:MSMP}.
The running time of the $(2-2/k)$-approximation algorithm provided by Zhao et al. \cite{zhao2005greedy} is $kn^3EO$, where $EO$ denotes the time that a call to the value oracle takes. On the other hand, the running time of our procedure is $O(n \, EO + n\log n)$ and hence almost linear.
All we need to do is first make $n$ oracle calls to evaluate $f(v)$ for each $v\in V$, and then sort the elements so that $f(v_1) \leq f(v_2) \leq \ldots \leq f(v_n)$ (which requires $O(n\log n)$ time).

\RestyleAlgo{algoruled}
\begin{algorithm}[ht]
	\footnotesize
	\KwIn{A ground set $V = \{v_1,v_2,\ldots,v_n\}$, and a set function $f:2^V \to \R$ with oracle access.}
	Sort and rename the elements so that $f(v_1) \leq f(v_2) \leq \ldots \leq f(v_n)$.\\
	$S_1 \leftarrow \{v_1\}, \; S_2 \leftarrow \{v_2\}, \; \ldots, \; S_{k-1} \leftarrow \{v_{k-1}\}, \; S_k \leftarrow V \setminus \{v_1,v_2,\ldots,v_{k-1}\}$.\\
	\KwOut{$(S_1,S_2,\ldots,S_k)$}
	\caption{\textsf{Simpler and faster algorithm for \MSMP{} }}
	\label{alg:MSMP}
\end{algorithm}

\begin{repcorollary}{cor:faster-algo}
	Algorithm~\ref{alg:MSMP} is a $2$-approximation algorithm for \MSMP{} running in time $O(n \, EO + n\log n)$, where $EO$ denotes the time that a call to the value oracle takes.
\end{repcorollary}

\section{Conclusion and open problems}

We revisited the class of Submodular $k$-Multiway Partitioning problems (\SMP{}). We proved new unconditional inapproximability results for the monotone and symmetric cases of \SMP{} in the value oracle model.

We introduced and explored a new class of submodular partitioning problems which generalizes \SMP{}. 
We showed that several results from multi-agent submodular minimization can be extended to their \kway{} counterparts at a small additional loss. Thus obtaining several new results for this class of problems. 

Many interesting open questions remain, perhaps the most important being about the approximation hardness of \SMP{}. 
It remains completely open whether these problems are polytime solvable for fixed values of $k>4$.
In addition, given the conditional hardness of approximation for \HKCUT{} 
based on \textsc{Densest-$k$-Subgraph}, we believe it may be possible to prove strong unconditional hardness of approximation results for \SMP{} in the value oracle model when $k$ is part of the input. 

It also remains open whether the $2$-approximation for \MSMP{} is tight. And more generally, to close the gap between the upper bound and lower bound in Theorem~\ref{thm:+1gap}.

\subsubsection*{Acknowledgements}
The author thanks Bruce Shepherd for valuable discussions and suggestions that motivated some of this work.

\bibliography{REFERENCES}

\begin{thebibliography}{10}

\bibitem{bhaskara2010detecting}
Aditya Bhaskara, Moses Charikar, Eden Chlamtac, Uriel Feige, and Aravindan
  Vijayaraghavan.
\newblock Detecting high log-densities: an $o(n^{1/4})$ approximation for
  densest k-subgraph.
\newblock In {\em Proceedings of the forty-second ACM symposium on Theory of
  computing}, pages 201--210. ACM, 2010.

\bibitem{chandrasekaran2020hypergraph}
Karthekeyan Chandrasekaran and Chandra Chekuri.
\newblock Hypergraph $ k $-cut for fixed $ k $ in deterministic polynomial
  time.
\newblock In {\em 2020 IEEE 61st Annual Symposium on Foundations of Computer
  Science (FOCS)}, pages 810--821, 2020.

\bibitem{chandrasekaran2019hypergraph}
Karthekeyan Chandrasekaran, Chao Xu, and Xilin Yu.
\newblock Hypergraph k-cut in randomized polynomial time.
\newblock {\em Mathematical Programming}, pages 1--29, 2019.

\bibitem{chekuri2011approximation}
Chandra Chekuri and Alina Ene.
\newblock Approximation algorithms for submodular multiway partition.
\newblock In {\em Foundations of Computer Science (FOCS), 2011 IEEE 52nd Annual
  Symposium on}, pages 807--816. IEEE, 2011.

\bibitem{chekuri2011submodular}
Chandra Chekuri and Alina Ene.
\newblock Submodular cost allocation problem and applications.
\newblock In {\em International Colloquium on Automata, Languages, and
  Programming}, pages 354--366. Springer, 2011.
\newblock Extended version: arXiv preprint arXiv:1105.2040.

\bibitem{chekuri2015note}
Chandra Chekuri and Shi Li.
\newblock On the hardness of approximating the $ k $-way hypergraph cut
  problem.
\newblock {\em Theory of Computing}, 16(1):1--8, 2020.

\bibitem{chopra1999note}
Sunil Chopra and Jonathan~H Owen.
\newblock A note on formulations for the a-partition problem on hypergraphs.
\newblock {\em Discrete applied mathematics}, 90(1-3):115--133, 1999.

\bibitem{ene2014hardness}
Alina Ene and Jan Vondr{\'a}k.
\newblock Hardness of submodular cost allocation: Lattice matching and a
  simplex coloring conjecture.
\newblock {\em Approximation, Randomization, and Combinatorial Optimization.
  Algorithms and Techniques (APPROX/RANDOM 2014)}, 28:144--159, 2014.

\bibitem{ene2013local}
Alina Ene, Jan Vondr{\'a}k, and Yi~Wu.
\newblock Local distribution and the symmetry gap: Approximability of multiway
  partitioning problems.
\newblock In {\em Proceedings of the Twenty-Fourth Annual ACM-SIAM Symposium on
  Discrete Algorithms}, pages 306--325. SIAM, 2013.

\bibitem{feige2011maximizing}
Uriel Feige, Vahab~S Mirrokni, and Jan Vondrak.
\newblock Maximizing non-monotone submodular functions.
\newblock {\em SIAM Journal on Computing}, 40(4):1133--1153, 2011.

\bibitem{goel2009approximability}
Gagan Goel, Chinmay Karande, Pushkar Tripathi, and Lei Wang.
\newblock Approximability of combinatorial problems with multi-agent submodular
  cost functions.
\newblock In {\em Foundations of Computer Science, 2009. FOCS'09. 50th Annual
  IEEE Symposium on}, pages 755--764. IEEE, 2009.

\bibitem{goemans2009approximating}
Michel~X Goemans, Nicholas~JA Harvey, Satoru Iwata, and Vahab Mirrokni.
\newblock Approximating submodular functions everywhere.
\newblock In {\em Proceedings of the twentieth annual ACM-SIAM symposium on
  Discrete algorithms}, pages 535--544. Society for Industrial and Applied
  Mathematics, 2009.

\bibitem{goldschmidt1994polynomial}
Olivier Goldschmidt and Dorit~S Hochbaum.
\newblock A polynomial algorithm for the k-cut problem for fixed k.
\newblock {\em Mathematics of operations research}, 19(1):24--37, 1994.

\bibitem{guinezsize}
Flavio Guinez and Maurice Queyranne.
\newblock The size-constrained submodular k-partition problem, 2012.
\newblock Manuscript,
  \url{https://docs.google.com/viewer?a=v&pid=sites&srcid=ZGVmYXVsdGRvbWFpbnxmbGF2aW9ndWluZXpob21lcGFnZXxneDo0NDVlMThkMDg4ZWRlOGI1}.

\bibitem{hayrapetyan2005network}
Ara Hayrapetyan, Chaitanya Swamy, and {\'E}va Tardos.
\newblock Network design for information networks.
\newblock In {\em Proceedings of the sixteenth annual ACM-SIAM symposium on
  Discrete algorithms}, pages 933--942. Society for Industrial and Applied
  Mathematics, 2005.

\bibitem{iwata2009submodular}
Satoru Iwata and Kiyohito Nagano.
\newblock Submodular function minimization under covering constraints.
\newblock In {\em Foundations of Computer Science, 2009. FOCS'09. 50th Annual
  IEEE Symposium on}, pages 671--680. IEEE, 2009.

\bibitem{iyer2014monotone}
Rishabh Iyer, Stefanie Jegelka, and Jeff Bilmes.
\newblock Monotone closure of relaxed constraints in submodular optimization:
  Connections between minimization and maximization: Extended version.
\newblock 2014.

\bibitem{karger1996new}
David~R Karger and Clifford Stein.
\newblock A new approach to the minimum cut problem.
\newblock {\em Journal of the ACM (JACM)}, 43(4):601--640, 1996.

\bibitem{koufogiannakis2013greedy}
Christos Koufogiannakis and Neal~E Young.
\newblock Greedy $\delta$-approximation algorithm for covering with arbitrary
  constraints and submodular cost.
\newblock {\em Algorithmica}, 66(1):113--152, 2013.

\bibitem{lovasz1983submodular}
L{\'a}szl{\'o} Lov{\'a}sz.
\newblock Submodular functions and convexity.
\newblock In {\em Mathematical Programming The State of the Art}, pages
  235--257. Springer, 1983.

\bibitem{manurangsi2017almost}
Pasin Manurangsi.
\newblock Almost-polynomial ratio eth-hardness of approximating densest
  k-subgraph.
\newblock In {\em Proceedings of the 49th Annual ACM SIGACT Symposium on Theory
  of Computing}, pages 954--961, 2017.

\bibitem{manurangsi2018inapproximability}
Pasin Manurangsi.
\newblock Inapproximability of maximum biclique problems, minimum k-cut and
  densest at-least-k-subgraph from the small set expansion hypothesis.
\newblock {\em Algorithms}, 11(1):10, 2018.

\bibitem{okumoto2012divide}
Kazumasa Okumoto, Takuro Fukunaga, and Hiroshi Nagamochi.
\newblock Divide-and-conquer algorithms for partitioning hypergraphs and
  submodular systems.
\newblock {\em Algorithmica}, 62(3-4):787--806, 2012.

\bibitem{queyranne1999optimum}
Maurice Queyranne.
\newblock On optimum size-constrained set partitions.
\newblock {\em Presented at 3rd Combinatorial Optimization Workshop, AUSSOIS
  '99, France}, 1999.
\newblock \url{http://www.iasi.cnr.it/iasi/aussois99/queyranne.html}.

\bibitem{santiago2019phdthesis}
Richard Santiago.
\newblock {\em Multi-Agent submodular optimization: Variations and
  generalizations}.
\newblock PhD thesis, McGill University, 2019.

\bibitem{santiago2018ma}
Richard Santiago and F.~Bruce Shepherd.
\newblock {Multi-Agent Submodular Optimization}.
\newblock {\em Approximation, Randomization, and Combinatorial Optimization.
  Algorithms and Techniques (APPROX/RANDOM 2018)}, 116:23:1--23:20, 2018.

\bibitem{santiago2016multivariate}
Richard Santiago and F.~Bruce Shepherd.
\newblock Multivariate submodular optimization.
\newblock In {\em Proceedings of the 36th International Conference on Machine
  Learning (ICML)}, pages 5599--5609, 2019.

\bibitem{saran1995finding}
Huzur Saran and Vijay~V Vazirani.
\newblock Finding k cuts within twice the optimal.
\newblock {\em SIAM Journal on Computing}, 24(1):101--108, 1995.

\bibitem{svitkina2011submodular}
Zoya Svitkina and Lisa Fleischer.
\newblock Submodular approximation: Sampling-based algorithms and lower bounds.
\newblock {\em SIAM Journal on Computing}, 40(6):1715--1737, 2011.

\bibitem{svitkina2010facility}
Zoya Svitkina and {\'E}VA Tardos.
\newblock Facility location with hierarchical facility costs.
\newblock {\em ACM Transactions on Algorithms (TALG)}, 6(2):37, 2010.

\bibitem{zhao2002thesis}
Liang Zhao.
\newblock Approximation algorithms for partition and design problems in
  networks.
\newblock {\em PhD Thesis}, 2002.

\bibitem{zhao2005greedy}
Liang Zhao, Hiroshi Nagamochi, and Toshihide Ibaraki.
\newblock Greedy splitting algorithms for approximating multiway partition
  problems.
\newblock {\em Mathematical Programming}, 102(1):167--183, 2005.

\end{thebibliography}
\bibliographystyle{plain}

\appendix

\section{NP-Hardness of \MSMP}
\label{app:monotone-hardness}

In this section we argue that \MSMP{} is NP-Hard. This was already discussed in \cite{zhao2005greedy}, and we include it here for completeness.
Given a graph $G=(V,E)$ with a weight function $w:E \to \R_+$, let $\delta$ denote the weighted cut function $\delta (S) = \sum_{e \in E, \,\emptyset \subsetneq e \subsetneq S} w(e)$. Also, let $h:2^V \to \R_+$ denote the function $h(S)=\sum_{e\in E, \, e \subseteq S} w(e)$, which sums the weights of all the edges contained in the set $S$.
It is well-known and not hard to see that the function $f=\delta+h$ is non-negative monotone submodular. Moreover, the two optimization problems
$$
(\mbox{\KCUT}) \quad \min \; \frac{1}{2} \sum_{i=1}^k \delta(S_i): S_1 \uplus S_2 \uplus \cdots \uplus S_k = V \mbox{ and } S_i \neq \emptyset \mbox{ for all }i\in [k]
$$
and 
$$
\min \sum_{i=1}^k f(S_i): S_1 \uplus S_2 \uplus \cdots \uplus S_k = V \mbox{ and } S_i \neq \emptyset \mbox{ for all }i\in [k]
$$
are equivalent (in terms of the optimal solution), since for any feasible solution $S_1,S_2,\ldots,S_k$ we have 
$$\sum_{i=1}^k f(S_i) = \sum_{i=1}^k \delta(S_i) + \sum_{i=1}^k h(S_i) = w(E) + \frac{1}{2} \sum_{i=1}^k \delta(S_i) .$$
It then follows that an optimal solution to \MSMP{} is also an optimal solution to \KCUT{}. Hence, since \KCUT{} is NP-Hard when $k$ is part of the input, it follows that so is \MSMP{}.

It is worth pointing out that the APX-Hardness of \KCUT{} does not translate to \MSMP{}. This is because in order to get solutions of equal optimal values, the objective function should be $f':=\delta+h-\frac{w(E)}{k}$, so that for any feasible solution $S_1,S_2,\ldots,S_k$ we have  $\sum_{i=1}^k f'(S_i) = \frac{1}{2} \sum_{i=1}^k \delta(S_i)$. However, in this case the function $f'$ is not non-negative.

\end{document}